\documentclass[11pt]{article}

\usepackage[margin=1in]{geometry}
\usepackage{amsmath,amssymb,amsthm}
\usepackage{hyperref}
\usepackage{xspace}

\newcommand{\NN}{\mathbb{N}}

\newcommand{\JJ}{\mathcal{J}} 
\newcommand{\II}{\mathcal{I}} 
\newcommand{\OO}{\mathcal{O}} 
\NewDocumentCommand{\bell}{O{\JJ} O{\II} O{\OO}}{B(#1,#2,#3)} 
\newcommand{\xx}{\mathbf{x}} 
\newcommand{\yy}{\mathbf{y}} 
\newcommand{\zz}{\mathbf{z}} 
\newcommand{\DD}{\mathcal{D}} 

\newcommand{\ie}{\text{i.e.}\xspace}
\newcommand{\eg}{\text{e.g.}\xspace}

\numberwithin{equation}{section}
\newtheorem{theorem}{Theorem}[section]
\newtheorem{definition}[theorem]{Definition}

\begin{document}

\title{Comment on `The axiom of choice and the no-signalling principle'}

\author{Martti Karvonen\\
Department of Computer Science, University College London\\
\texttt{martti.karvonen@ucl.ac.uk}}
\date{}

\maketitle

\begin{abstract}
The main claim of Baumeler et al. (Proc. R. Soc. A: Math. Phys. Eng. Sci. 481, 20240601) is that  ``functional (deterministic) no-signalling resources can be stronger than probabilistic ones'' in a certain nonlocal game on a Bell scenario with countably many parties.  
We disagree and argue that (i) under standard definitions,  deterministic no-signalling resources are always probabilistic no-signalling resources;  (ii) the deterministic strategy considered in Baumeler et al. can be promoted to
a genuinely probabilistic strategy with similar properties; and (iii) a key step in a derivation in Baumeler et al., claimed to hold for all no-signalling strategies, implicitly assumes measurability, leaving a gap in the argument. We propose measurability assumptions which we conjecture would make this derivation rigorous. Taken together, the phenomenon highlighted by Baumeler et al. is best understood as a difference between measurable and non-measurable no-signalling resources. 
 \end{abstract}

\section{A brief summary of the study by Baumeler et al.}

Baumeler et al.~\cite{ACandNS} consider a Bell scenario with countably many parties. Modelled on a certain hat puzzle~\cite{hatproblems}, a referee first chooses one of two hat colours for each party, independently and uniformly at random, resulting in an infinite binary sequence. Each party then receives as their input the suffix listing the colours of all later parties and must guess their own hat colour without any communication with the others.  Intuitively, as the colours are picked independently, one gains no information from the colours of other hats and thus cannot do any better than a random guess, winning half of the time.  Building on this intuition, Baumeler et al.~\cite{ACandNS} claim a proof that the ratio of players that win converges to one half almost surely, and that this holds for all no-signalling strategies. However, using the axiom of choice, Baumeler et al.~\cite{ACandNS} construct a deterministic no-signalling strategy which guarantees, for every input, that all but finitely many of the players win.

This is interpreted in~\cite{ACandNS} as showcasing a difference between deterministic and probabilistic no-signalling. In particular, Baumeler et al.~\cite{ACandNS} claim that some deterministic no-signalling strategies are not probabilistic no-signalling strategies.  We disagree and instead believe that the phenomenon highlighted by~\cite{ACandNS} is best understood as a difference between measurable and non-measurable strategies. 

We support this viewpoint in three ways. First, we prove that, under standard definitions, also agreed upon by the authors of~\cite{ACandNS}, deterministic no-signalling resources are always probabilistic no-signalling resources (albeit of a trivial kind). In particular, the deterministic strategy constructed in~\cite{ACandNS} satisfies the probabilistic no-signalling principle as usually understood. Second, this deterministic strategy can be used to construct a genuinely probabilistic no-signalling strategy with similarly paradoxical behaviour. Third, we show that the proof in~\cite{ACandNS} that the ratio of winning players converges to one half implicitly assumes measurability when applying Azuma’s theorem~\cite{Azuma1967}, and we propose measurability assumptions that we conjecture would make the argument rigorous.

\section{Deterministic no-signalling implies probabilistic no-signalling}
In this section we demonstrate that all deterministic no-signalling resources are (degenerate) probabilistic resources, ruling out the separation claimed in~\cite{ACandNS} under standard definitions. We explain this in a self-contained manner, assuming only a modicum of measure theory (\eg~\cite{Billingsley:probabilityandmeasure}).  A Bell scenario formalizes the idea of a set of parties performing experiments on a large shared system, with each party choosing their own measurement setting for their part of it and obtaining their own outcome. 

\begin{definition} A  \emph{Bell scenario} $\bell$ is specified by fixing (i) a (possibly infinite) set $\JJ$ of parties, (ii) a (possibly infinite) set $\II$ of inputs, (iii) a measurable space $\OO$ of outcomes. The \emph{joint measurements} of $\bell$ are given by tuples $\xx=(x_j)_{j\in \JJ}$ in  $\II^\JJ$ fixing a measurement in $\II$ for each party in $\JJ$, and \emph{joint outcomes} are elements of the measurable space $\OO^\JJ$.
\end{definition}

While our outcome set will be a finite and discrete measurable space, we still need to invoke measure theory to define joint probability distributions over joint measurements, since they live in the product $\OO^\JJ=\prod_{j\in \JJ} \OO$. We equip $\OO^\JJ$ with the standard product $\sigma$-algebra generated by the cylinder sets $\prod_{j\in\JJ} X_j$ where each $X_j\subset \OO$ is measurable and $X_j=\OO$ for all but finitely many $j\in\JJ$. This ensures that for each $U\subset \JJ$, the projection $\pi_U\colon\prod_{j\in \JJ} \OO\to \prod_{j\in U}\OO$  onto $U$ is  measurable. When viewing elements of the product as functions, we write $\xx|_U\in \OO^U$ for the restriction of $\xx\in \OO^\JJ$ to $U$.

An empirical model\footnote{Also known as a  correlation, behaviour, or a strategy, depending on the source.} is an idealization of the results of such an experiment, whether as averages over multiple runs or as theoretically predicted probabilities.

\begin{definition} An empirical model over $\bell$ consists of a family $e=(e_\xx)_{\xx\in \II^\JJ}$ where for each joint input $\xx\in \II^\JJ$ we have a probability measure
$e_\xx$  on the space $\OO^\JJ$ of joint outcomes. We write $e:\bell$ to denote that $e$ is an empirical model over $\bell$.
\end{definition}

Arbitrary empirical models are too general in certain physically motivated situations. If the parties are space-like separated, then choices made by some parties cannot causally influence the outcomes of others,  assuming relativity. It is then natural to postulate the  \emph{no-signalling conditions} which assert that the distribution for a given set of parties does not depend on the inputs of the parties in the complement. Formally, we note that each restriction $|_U\colon  \OO^\JJ\to \OO^U$ is measurable. As a result, for any probability measure $d$ on the space  $\OO^\JJ$ of joint outcomes, there is a well-defined push-forward probability measure on $\OO^U$ along the restriction to $U$ which is usually known as the marginalization of $d$ to $U$. We overload notation and denote the resulting probability measure by $d|_U$. With this out of the way, we can formally define no-signalling.

\begin{definition}
An empirical model $e:\bell$ is a \emph{no-signalling empirical model}, if for all joint inputs $\xx,\yy\in \II^\JJ$ we have
\begin{equation}\label{eq:NS}
e_\xx|_{[\xx=\yy]}=e_\yy|_{[\xx=\yy]}
\end{equation}
where $[\xx=\yy]$ denotes the set $\{j\in \JJ:\xx_j=\yy_j\}$ of indices where $\xx$ and $\yy$ agree.
\end{definition}

We  emphasize that these definitions of (no-signalling) empirical models are standard: for instance, they agree with those given in~\cite{CVContextuality} which directly generalize the well-established Abramsky--Brandenburger framework for contextuality~\cite{ab} into the infinitary setting. Moreover, in an email exchange about these issues when~\cite{ACandNS} first appeared on arXiv, the corresponding author confirmed that  ``this is the natural definition of (probabilistic) no-signaling and it is indeed the one that we adopted throughout''~\cite{the_email}.

That said, one may well wonder why we do not require that the distributions $e_\xx$ depend measurably on $\xx$, forming a \emph{Markov kernel}. To our awareness, this requirement has not been explicitly discussed in prior literature on Bell scenarios and contextuality, mainly because this assumption holds automatically when there are finitely many parties and inputs (so that the space of joint inputs is a finite, discrete measurable space), and the case of infinitely many measurements, even when formally allowed as in~\cite{CVContextuality}, has rarely been studied in detail. As we believe that the phenomenon highlighted in~\cite{ACandNS} is best understood as displaying a difference between measurable and non-measurable empirical models, we could view~\cite{ACandNS} as giving an example of the kind of pathology that can be avoided by requiring empirical models to be measurable in this sense. We return to this measurability assumption in the conclusion and ultimately advocate for it. 

We stress that, due to the measurability of the restrictions, the probability measures appearing in \eqref{eq:NS} are always well defined, no matter how pathological $e$ is otherwise. This is not to say that every empirical model is always no-signalling, only that the question of whether it is no-signalling is always well-posed.  We now define the class of local empirical models.

\begin{definition} Let $\mu$ be a probability measure on  $\OO^{\JJ\times\II}$. Define an empirical model $e_\mu=(e_{\mu,\xx})_{\xx\in \II^\JJ}$ by setting $(e_{\mu,\xx}):=\mu|_\xx$, where we write $\mu|_\xx$ for the push-forward of $\mu$ along the restriction to the graph $\{(j,\xx_j): j\in\JJ\}\subset \JJ\times \II$ of $\xx\in \II^\JJ$. We say that an empirical model $e$ is \emph{local} if $e=e_\mu$ for some probability measure $\mu$ on  $\OO^{\JJ\times\II}$.
\end{definition}

It is well known that quantum mechanics predicts~\cite{bell64,nonlocalityreview} and experiments confirm~\cite{hensen2015loophole} the existence of nonlocal empirical models. 

\begin{theorem}\label{thm:NCisNS} All local empirical models are no-signalling. 
\end{theorem}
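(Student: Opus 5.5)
The plan is to compute both sides of \eqref{eq:NS} for a local model $e=e_\mu$ and show that each equals one and the same push-forward of $\mu$. Fix joint inputs $\xx,\yy\in\II^\JJ$ and put $U=[\xx=\yy]$. For any $\zz\in\II^\JJ$, let $r_\zz\colon \OO^{\JJ\times\II}\to\OO^\JJ$ be the restriction to the graph of $\zz$, i.e.\ $r_\zz(\omega)_j=\omega_{(j,\zz_j)}$. By definition $e_{\mu,\zz}=\mu\circ r_\zz^{-1}$, so the marginal is
\[
e_{\mu,\zz}|_U=\mu\circ r_\zz^{-1}\circ\pi_U^{-1}=\mu\circ(\pi_U\circ r_\zz)^{-1},
\]
using that push-forwards compose functorially: $(g\circ f)_*\mu=g_*(f_*\mu)$.

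The key observation is that $\pi_U\circ r_\xx=\pi_U\circ r_\yy$ as maps $\OO^{\JJ\times\II}\to\OO^U$. Indeed, both send $\omega$ to $(\omega_{(j,\xx_j)})_{j\in U}$, because $\xx_j=\yy_j$ for every $j\in U$. Both maps are simply the restriction of $\omega$ to the set $\{(j,\xx_j):j\in U\}\subset\JJ\times\II$. Equal measurable maps give equal push-forwards, so $e_{\mu,\xx}|_U=e_{\mu,\yy}|_U$, which is exactly \eqref{eq:NS}.

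The only point needing care is that the maps involved are measurable for the product $\sigma$-algebras, so that the push-forwards are defined and functoriality applies.
\begin{itemize}
\item Each $r_\zz$ is measurable: the preimage of a cylinder $\prod_j X_j$ with $X_j=\OO$ for all but finitely many $j$ is the cylinder in $\OO^{\JJ\times\II}$ with constraint $X_j$ at $(j,\zz_j)$. This is again a finitely-constrained cylinder, and since cylinders generate the $\sigma$-algebra, measurability follows.
\item The projections $\pi_U$ are measurable, as noted earlier in the paper.
\end{itemize}
I do not expect a genuine obstacle. No measurability in $\xx$ is required, since $\xx$ and $\yy$ are fixed throughout.
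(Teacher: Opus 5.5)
Your proof is correct and matches the paper's argument. Both use functoriality of push-forwards plus the fact that restricting to the graph of $\xx$ and then to $[\xx=\yy]$ equals doing the same via $\yy$; the paper's middle term $\mu|_{[\xx=\yy]}$ is exactly your common map $\pi_U\circ r_\xx=\pi_U\circ r_\yy$, and your explicit measurability check for $r_\zz$ is a detail the paper leaves implicit.
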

\begin{proof} Follows directly from functoriality of taking push-forward measures, \ie the fact that pushing a measure forward twice is the same as pushing it forward once along the composite measurable mapping. Explicitly, let $e_\mu$ be local, $\xx,\yy\in \II^\JJ$ be arbitrary, and compute as follows:
\[e_{\mu,\xx}|_{[\xx=\yy]}=(\mu|_\xx)|_{[\xx=\yy]}=\mu|_{[\xx=\yy]}=(\mu|_\yy)|_{[\xx=\yy]}=e_{\mu,\yy}|_{[\xx=\yy]}.\qedhere
\]
\end{proof}
In particular, if $\mu$ is given by the Dirac measure $\delta(f)$ on some function $f\colon \JJ\times\II\to \OO$, then it always induces a no-signalling empirical model. Similarly, finite and countable convex combinations of such models are no-signalling, \ie if $\mu_i$ for $i\in\NN$ are Dirac  measures and $r_i$ for $i\in\NN$ are nonnegative reals summing to $1$, then $\mu:=\sum_{i\in\NN} r_i \mu_i$ induces a no-signalling empirical model. 

We now turn our attention to the \emph{functional no-signalling principle} discussed in~\cite[Equation 1.3]{ACandNS}. In this setting, one considers a function $f\colon \II^\JJ\to \OO^\JJ$ from joint inputs to joint outputs, written as $f(\xx)=(f_j(\xx))_{j\in \JJ}$ for $\xx=(x_j)_{j\in \JJ}$ in $\II^\JJ$. Any such function $f$ induces an empirical model $e(f): \bell$ by setting the distribution $e(f)|_\xx$ on $\OO^\JJ$ to be the Dirac measure on $f(\xx)$.

Such a function is said to satisfy the \emph{functional no-signalling principle} if the $j$th output depends only on the $j$th input, which can be expressed by saying that for all $j\in\JJ$ and all $\xx,\yy\in \II^\JJ$, if $x_j=y_j$, then $f_j(\xx)=f_j(\yy)$. In that case the function $f\colon \II^\JJ\to \OO^\JJ$  in fact determines a function $\hat{f}\colon\JJ\times\II\to \OO$, defined by $\hat{f}(j,i)=f_j(\xx)$, where $\xx$ is an arbitrary element of $\II^\JJ$ satisfying $x_j=i$. As $f$ satisfies the functional no-signalling principle, $\hat{f}$ is indeed a well-defined function, as the choice of $\xx$ does not affect the end result. Moreover, $e(f)=e_{\delta(\hat{f})}$ \ie  $e(f)$ is induced by the Dirac measure on $\hat{f}$, so that $e(f)$ is local. Therefore, Theorem~\ref{thm:NCisNS} immediately gives the following result.
\begin{theorem} If  $f\colon \II^\JJ\to \OO^\JJ$ satisfies the functional no-signalling principle, it induces a no-signalling empirical model $e(f)$ on $\bell$.
\end{theorem}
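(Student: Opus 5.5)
The plan is to show that $e(f)$ is local and then invoke Theorem~\ref{thm:NCisNS}, which already says local models are no-signalling.

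First, define $\hat{f}\colon \JJ\times\II\to\OO$ by $\hat{f}(j,i):=f_j(\xx)$ for any $\xx\in\II^\JJ$ with $x_j=i$. Such an $\xx$ exists whenever $\II$ is nonempty; the empty case is vacuous because then there are no joint inputs at all. The functional no-signalling principle says exactly that $f_j(\xx)=f_j(\yy)$ whenever $x_j=y_j$, so $\hat{f}$ is well defined. Now view $\hat{f}$ as a point of $\OO^{\JJ\times\II}$ and let $\mu:=\delta(\hat{f})$ be the Dirac measure there. This is a probability measure on the product $\sigma$-algebra with no extra conditions, since the Dirac measure at any point of any measurable space is a probability measure.

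Next, verify that $e(f)=e_\mu$. Fix $\xx\in\II^\JJ$ and let $r_\xx\colon\OO^{\JJ\times\II}\to\OO^\JJ$ be the restriction to the graph $\{(j,x_j): j\in\JJ\}$, identified with $\JJ$. This map is measurable by the remark on projections onto sub-products. The push-forward of a Dirac measure along a measurable map is the Dirac measure at the image point, so $\mu|_\xx=\delta(r_\xx(\hat{f}))$. Moreover
\[ r_\xx(\hat{f})=(\hat{f}(j,x_j))_{j\in\JJ}=(f_j(\xx))_{j\in\JJ}=f(\xx). \]
Hence $e_{\mu,\xx}=\delta(f(\xx))=e(f)_\xx$ for every $\xx$, so $e(f)$ is local, and Theorem~\ref{thm:NCisNS} gives that it is no-signalling.

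The only point needing care is the claim that push-forward preserves Dirac measures. For a measurable set $A\subset\OO^\JJ$,
\[ \delta(\hat{f})(r_\xx^{-1}(A))=1 \iff r_\xx(\hat{f})\in A, \]
and this holds with no measurability assumption on $f$ itself. That is why the argument goes through for arbitrary, possibly non-measurable, functional strategies such as the one built with the axiom of choice. As a sanity check one can also verify \eqref{eq:NS} directly: both sides are the Dirac measure on $(f_j(\xx))_{j\in[\xx=\yy]}=(f_j(\yy))_{j\in[\xx=\yy]}$.
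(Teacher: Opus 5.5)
Your proposal is correct and follows the paper's argument exactly: define $\hat{f}$ using the functional no-signalling principle, observe that $e(f)=e_{\delta(\hat{f})}$ so $e(f)$ is local, and conclude by Theorem~\ref{thm:NCisNS}. Your extra verifications (push-forward of a Dirac measure is a Dirac measure, and no measurability of $f$ is needed) only spell out details the paper leaves implicit; the empty-$\II$ remark is also unnecessary, since for any $(j,i)\in\JJ\times\II$ the constant tuple with value $i$ witnesses the existence of a suitable $\xx$.
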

Thus all functional no-signalling resources are no-signalling probabilistic resources (albeit of a degenerate kind), ruling out the claimed separation under standard definitions. 

\section{A probabilistic counterexample}

We now explain the setting of~\cite{ACandNS} in more detail. The main construction is based on an `infinite hat puzzle'~\cite{hatproblems}: there is a player for each natural number, and a referee equips each player with a hat with one of two colours $\{0,1\}$. The $j$th player sees the colours of the hats of all later players (\ie the $i$th hat for $i>j$) but no other hats. The players cannot communicate after the setup but may agree on a strategy beforehand, possibly using shared randomness, and each must guess the colour of their own hat based on what they see and the agreed strategy, with the referee recording for each player whether they won (by guessing correctly) or lost.

We model the game on the Bell scenario  $\bell[\NN][\{0,1\}^\NN][\{0,1\}]$ with parties given by $\NN$, inputs\footnote{Baumeler et al.~\cite{ACandNS} work with inputs given by the unit interval $I=[0,1]$. Taking binary expansions maps the unit interval to $\{0,1\}^\NN$, and yields an isomorphism of probability spaces, up to null sets corresponding to numbers with two binary expansions and the corresponding binary sequences, when the unit interval is endowed with the uniform distribution (Lebesgue measure) and  $\{0,1\}^\NN$ with the (completion of the) product of infinitely many coin flips. Therefore, we can view this choice as just a matter of notational convenience. Moreover, working with the slightly less familiar measurable space  $\{0,1\}^\NN$ discourages computing integrals without first verifying that they are well defined.} by $\{0,1\}^\NN$ \ie sequences of colours, and two outputs $\{0,1\}$. The game itself can be modelled as follows: the referee first picks each hat independently and uniformly at random, \ie samples a sequence $\xx=(x_0,x_1,x_2,\dots)$  from the uniform product distribution on $\{0,1\}^\NN$. The interpretation is that $x_j\in\{0,1\}$ is the colour of the $j$th hat. The referee then sends each player an input: the $j$th player is given the input $(x_{j+1},x_{j+2},\dots)$, \ie the colours of the hats of all the later players. They then output their guess $a_j$, and the result $r_j$ of the $j$th player is then given by
\[r_j=\begin{cases} 1 \text{ if }a_j=x_j \\
					0 \text{ otherwise.}
	  \end{cases}\]
Assuming the players have no limits on their memory or computational abilities, they can counter-intuitively construct a strategy that guarantees that all but finitely many of them win, using the \emph{axiom of choice}. Consider the equivalence relation $\sim$ on $\{0,1\}^\NN$ of having equal tails after some threshold: formally, $\xx\sim\yy$ if there is an $n\in\NN$ such that $x_j=y_j$ for all $j>n$. Using the axiom of choice, the players fix one representative in each equivalence class and encode these choices in a function $g\colon \{0,1\}^\NN\to\{0,1\}^\NN$ sending each $\xx$ to the chosen representative of its class. The strategy of the $j$th player is then as follows: given the input $(x_{j+1},x_{j+2},\dots)$, they prepend it with $j+1$ zeroes, obtaining a sequence $\yy_j:=(0,\dots , 0 ,x_{j+1},x_{j+2},\dots)\in \{0,1\}^\NN$, and set their guess $a_j$ to be $g(\yy_j)_j$. By construction, for each $j\in\NN$ we have $\yy_j\sim \xx$ and therefore $g(\yy_j)=g(\xx)$. As $g(\xx)\sim\xx$, this means that after some threshold $n(\xx)\in\NN$, all players guess correctly.  

This strategy has each player deterministically choose their outcome based only on their own input, thus inducing a function $\NN\times \{0,1\}^\NN\to\{0,1\}$ and hence by Theorem~\ref{thm:NCisNS} a no-signalling empirical model on $\bell[\NN][\{0,1\}^\NN][\{0,1\}]$.  Now, Baumeler et al.~\cite{ACandNS} summarize their main result as follows: ``Hence, imposing probabilistic NS, the ratio of players that win the game over the total converges to 1/2 almost surely.'' However, the aforementioned strategy gives a counterexample. This counterexample is in fact discussed in~\cite{ACandNS} but interpreted differently: it is treated as a functional no-signalling strategy which fails to be a probabilistic no-signalling model. As far as we can understand, their argument in~\cite[Section 1(a)]{ACandNS} for this is as follows: the no-signalling equations~\eqref{eq:NS}, when formulated rigorously using measure theory, require the (joint) inputs and outputs to  ``be the values of random variables'' which formally are measurable functions on some sample space. However, the axiom of choice can produce non-measurable sets (such as the set of chosen representatives above) giving rise to non-measurable functions which in turn fail to be random variables, so that~\eqref{eq:NS} may not be well-posed when the empirical model is constructed using the axiom of choice. 

To put our objection bluntly, just because some functions associated to an empirical model may fail to be measurable (such as the function $\xx\mapsto e_\xx$ sending a joint input to a joint distribution of outcomes), it does not mean that all of them are non-measurable. In particular, the no-signalling equations always involve honest probability distributions due to the measurability of the restrictions $|_U\colon  \OO^\JJ\to \OO^U$, and the empirical model constructed in~\cite{ACandNS} discussed earlier satisfies the no-signalling equations in the usual, probabilistic sense.

Alternatively,~\cite{ACandNS} could be seen as implicitly assuming that empirical models satisfying~\eqref{eq:NS} have to be measurable. In this view, our contribution is (i) making this assumption explicit, (ii) observing that this assumption is not necessary for~\eqref{eq:NS}, and (iii) pointing out that this is an assumption of \emph{measurability} rather than a difference between probabilistic as opposed to deterministic strategies. To emphasize that this is not restricted to deterministic strategies, let $g_j$ be obtained from $g$ by flipping the first $j$ outputs and keeping the rest as in $g$. Let $e_j$ be the empirical model induced by $g_j$, and define $d:=\frac{1}{2}e_0+\frac{1}{2}e_1$ and $e:=\sum_{j\in\NN}\frac{1}{2^{j+1}}e_j$. These models are no-signalling by Theorem~\ref{thm:NCisNS}: the behaviour of $d$ is uniformly random at the first party (and deterministic otherwise), while the behaviour of $e$ is genuinely probabilistic for every party (albeit it tends to deterministic behaviour exponentially fast). Moreover, for each fixed joint input, the strategy  $d$ guarantees that all but finitely many of the players win, and for $e$ this happens almost surely (by an application of the first Borel--Cantelli lemma). It follows that neither satisfies the bounds derived in~\cite{ACandNS}. These two genuinely probabilistic models show that the findings of~\cite{ACandNS} do not demonstrate a separation between deterministic and probabilistic no‑signalling strategies under standard definitions.

\section{Missing measurability assumption} 

In~\cite{ACandNS}, it is claimed that, under probabilistic no‑signalling, the ratio of winning players converges to one half almost surely.  We have now exhibited a probabilistic counterexample, so this claim does not hold for all no-signalling probabilistic empirical models. We next identify a gap in their derivation, and propose a stronger assumption which we conjecture to lead to a corrected proof, applying only to a subclass of no-signalling empirical models. 

The key gap we have identified is the following: at a crucial juncture the argument applies Azuma's theorem~\cite{Azuma1967} which concerns certain sums of random variables. Here, the candidate random variables $S_n$ are given by sums $S_n=\sum_{j=1}^n s_j$ of ``success random variables'' $s_j$ where $s_j=1$ if the $j$th player wins and $s_j=-1$ otherwise. 

Recall that, formally, a real-valued random variable is a measurable function into the reals. The domain of this function $s_j$ is never explicitly defined in~\cite{ACandNS}, but it is reasonable to take it to be the space  $\{0,1\}^\NN$, since that is the sample space for the referee’s random assignments of hats (or inputs).

We now argue that these ``random variables'' need not be measurable functions, in which case the assumptions of Azuma's theorem are not satisfied. The functions $S_n$ are measurable iff the functions $s_j$ are, so we focus on the latter. We first consider the situation when players are using the deterministic strategy constructed earlier using the function $g\colon \{0,1\}^\NN\to \{0,1\}^\NN$ obtained via the axiom of choice. Then $s_j$ is given by
\[s_j(\xx):=\begin{cases} 1 \text{ if }g(\xx)_j=x_j \\
					-1 \text{ otherwise.}
	  \end{cases}\]
As this function takes only two values, we see that $s_j$ is measurable iff the set $G_j:=\{\xx\in\{0,1\}^\NN:g(\xx)_j=x_j\}$ of inputs for which the $j$th party wins when following the strategy induced by $g$ is measurable. If the sets $G_j$ were measurable, so would be the set 
$G:=\bigcap_{j=1}^\infty G_j=\{\xx\in\{0,1\}^\NN:g(\xx)=\xx\}$ of inputs where all players win. This coincides with the set of chosen representatives, which is a direct analogue of the well-known Vitali set on the reals (\eg~\cite[Section 3]{Billingsley:probabilityandmeasure}) and can be shown to be non-measurable similarly.  

For the sake of completeness, we now prove that the set $G$ of chosen representatives is not measurable. Recall first that pointwise addition modulo $2$ makes $\{0,1\}^\NN$ into a group, and using this group operation we can translate a subset $A\subset \{0,1\}^\NN$ using an element $\xx\in\{0,1\}^\NN$, resulting in the translated set $\xx+A:=\{\xx+\yy:\yy\in A\}$. Write $F$ for the subset of $\{0,1\}^\NN$ consisting of all finitely supported infinite sequences, \ie elements $\xx$ of  $\{0,1\}^\NN$ which are zero on all but finitely many entries. Note that $F$ is countable (for example, it is a countable union over the finite sets $F_i$ of sequences which are zero after $i$). 

We now observe that for distinct elements $\xx$ and $\yy$ of $F$ the sets $\xx+G$ and $\yy+G$ are disjoint. If not, pick some $\zz\in \xx+G\cap \yy+G$, which can be written as $\zz=\xx+r_1=\yy+r_2$ for some $r_1,r_2\in G$. As $\xx$ and $\yy$ are finitely supported, we have $r_1\sim r_2$ and therefore $r_1=r_2$ as $G$ has exactly one element from each equivalence class. Therefore $\zz=\xx+r_1=\yy+r_2$ implies $\xx=\yy$, a contradiction.  Moreover, $\bigcup_{\xx\in F} \xx+G=\{0,1\}^\NN$ as for any $\yy\in \{0,1\}^\NN$ the sequences $\yy$ and $g(\yy)$ differ at finitely many points, so that $\xx:=\yy+g(\yy)\in F$ satisfies $\yy=\xx+g(\yy)$ and therefore $\yy\in \xx+G$. We have thus expressed $\{0,1\}^\NN$ as a countable union of disjoint sets (the translates of $G$).  

Now, let $\mu$ be the measure of uniform independent coinflips. This measure is in fact the Haar measure on the (compact) group $\{0,1\}^\NN$, which in particular means that it is translation invariant, \ie $\mu(A)=\mu(\xx+A)$ for any $\xx\in \{0,1\}^\NN$ and any measurable $A$\footnote{Translating along $\xx$ gives a measurable bijection on $\{0,1\}^\NN$, so it preserves measurability.}. Now if $G$ were measurable, 
\[1=\mu(\{0,1\}^\NN)=\mu\biggl(\bigcup_{\xx\in F} \xx+G\biggr)=\sum_{\xx\in F}\mu (\xx+G)=\sum_{\xx\in F}\mu (G)=\sum_{i\in\NN}\mu (G)\] 
where the last step is by countability of $F$. However, if $\mu(G)=0$, then $\sum_{i\in\NN}\mu (G)=0$, and otherwise this sum diverges, so there is no possible value for $\mu(G)$ in the unit interval satisfying this. Thus we reach a contradiction, showing that $G$ is in fact not measurable. 

Although the point is clearest for deterministic strategies, the mixed model $d:=\frac{1}{2}e_0+\frac{1}{2}e_1$  above exhibits the same issues for all $j\geq 1$. For $e:=\sum_{j\in\NN}\frac{1}{2^{j+1}}e_j$, one can also show that the Markov kernels $s_j\colon \{0,1\}^\NN\to \{-1,1\}$ are not measurable, although this takes some more work. Interestingly, none of these examples are counterexamples to the derived bound in the sense of violating it: rather, they fall outside its scope, as the probabilities the bound concerns do not exist for these strategies.

\section{Conclusion}
The main phenomenon highlighted in~\cite{ACandNS} is not about deterministic versus probabilistic strategies, nor about the no-signalling principle itself. Rather, the issue is that an empirical model may satisfy the no-signalling principle yet be pathological in the sense of being non-measurable.

To formalize this, recall that an empirical model $e: \bell$ is defined as a family $e=(e_\xx)_{\xx\in \II^\JJ}$. Equivalently, it can be thought of as a function sending each joint input $\xx\in \II^\JJ$ to the distribution $e_\xx$ over joint outcomes. It is known that the set $\DD (X)$ of distributions on a measurable space $X$ can itself be made into a measurable space, so that measurable maps $X\to\DD (Y)$ correspond to stochastic maps (Markov kernels) $X\to Y$~\cite{Giry1982}. We propose to call an empirical model $e: \bell$  \emph{measurable} if it is measurable when viewed as a function from joint inputs to distributions over outcomes with respect to the product $\sigma$-algebras fixed above.  We conjecture that the results of~\cite{ACandNS} apply to all measurable empirical models, and hence the paper contributes another example to the long list of differences between measurable and non-measurable behaviours. Moreover, we believe that it is reasonable to require that all empirical models be measurable in this sense. The main reason is operational: one can compose any measurable empirical model with any distribution on joint inputs and any measurable post-processing on outputs and obtain a well-defined probability measure. In particular, for any nonlocal game with a measurable win predicate, the probability of winning under any input distribution is well-defined. The assumption also rules out the pathological examples discussed earlier, and holds trivially when the sets of parties and inputs are finite. In any case, it is crucial in science to state one’s assumptions explicitly.

\paragraph{Acknowledgements.} Martti Karvonen acknowledges useful conversations with Rui Soares Barbosa and support from the EPSRC grant EP/V040944/1 Resources in Computation.

\bibliographystyle{unsrt} 

\bibliography{refs}

\end{document}